\newtheorem{theorem}{Theorem}
\newtheorem{remark}{Remark}
\newtheorem{lemma}{Lemma}
\newtheorem{definition}{Definition}
\long\def\symbolfootnote[#1]#2{\begingroup%
\def\thefootnote{\fnsymbol{footnote}}\footnote[#1]{#2}\endgroup}
\newcommand{\od}{\mathsf{od}}
\newcommand{\maxod}{\mathsf{od}_{\text{max}}}
\newcommand{\gap}{\vspace{-0ex}}
\begin{document}

\title{Optimal Index Codes for a Class of Multicast Networks with Receiver Side Information}
\author{\IEEEauthorblockN{Lawrence Ong}
\IEEEauthorblockA{School of Electrical Engineering and Computer Science,\\The University of Newcastle, Australia \\
Email: lawrence.ong@cantab.net}
\and
\IEEEauthorblockN{Chin Keong Ho}
\IEEEauthorblockA{Institute for Infocomm Research, A*STAR,\\ 1 Fusionopolis Way, \#21-01 Connexis, Singapore \\
Email: hock@i2r.a-star.edu.sg}
}
\maketitle

\begin{abstract}
This paper studies a special class of multicast index coding problems where a sender transmits messages to multiple receivers, each with some side information. 
Here, each receiver knows a unique message a priori, and there is no restriction on how many messages each receiver requests from the sender. For this class of multicast index coding problems, we obtain the optimal index code, which has the shortest codelength for which the sender needs to send in order for all receivers to obtain their (respective) requested messages. This is the first class of index coding problems where the optimal index codes are found. In addition, linear index codes are shown to be optimal for this class of index coding problems.
\end{abstract}

\section{Introduction}

\subsection{Background}

Consider the following communication problem of broadcasting with receiver side information. A single sender wishes to send a set of $m$ messages $\mathcal{M} = \{x_1, x_2, \dotsc, x_m\}$ to a set of $n$ receivers $\mathcal{R}= \{R_1,R_2\dotsc,R_n\}$. Each receiver is defined as $R_i \triangleq (\mathcal{W}_i,\mathcal{K}_i)$, i.e., it knows some messages $\mathcal{K}_i \subseteq \mathcal{M}$ a priori, and it wants to obtain some messages $\mathcal{W}_i \subseteq \mathcal{M}$. This is known as the {\em index coding} problem~\cite{elrouayheb10}, and any index coding problem can be completely specified by $(\mathcal{M},\mathcal{R})$. In this paper, we consider only binary messages\footnote{The results here also apply to messages of equal size that are non-binary.}, i.e., $x_i \in \{0,1\}$ for all $i \in \{1,2,\dotsc,m\}$, where $x_i$ are each uniformly distributed on $\{0,1\}$ and are mutually independent.

An {\em index code} for the index coding problem is defined as:
\begin{definition}[Index Code]
An index code for the index coding problem $(\mathcal{M},\mathcal{R})$ consists of
\begin{enumerate}
\item An encoding function for the sender, $E: \{0,1\}^m \rightarrow \{0,1\}^\ell$, and
\item A decoding function for each receiver, $D_i: \{0,1\}^{\ell+|\mathcal{K}_i|} \rightarrow \{0,1\}^{|\mathcal{W}_i|}$ such that $D_i(E(\mathcal{M}),\mathcal{K}_i) = \mathcal{W}_i$, for each $i \in \{1,2,\dotsc, n\}$.
\end{enumerate}
\end{definition}
In words, the sender encodes its $m$-bit message word into an $\ell$-bit codeword which is given to all receivers. Using the codeword and its known messages, each receiver decodes the messages that it wants.
The integer $\ell$ is referred to as the {\em length} of the index code. Let $\ell^*(\mathcal{M},\mathcal{R})$ be the smallest integer $\ell$ for which the above conditions hold. 

Our objective is to determine $\ell^*(\mathcal{M},\mathcal{R})$ and construct an {\em optimal} index code that has length $\ell^*(\mathcal{M},\mathcal{R})$. In practice, this leads to the optimal use of transmission energy and resources.

Without loss of generality, we assume that $|\mathcal{W}_i| \geq 1$ and $|\mathcal{K}_i| \geq 1$ for each $i \in \{1,2,\dotsc, n\}$, meaning that each receiver knows at least one bit and requests for at least one bit. This is because (i) any receiver that does not requests for any message bit can be removed from the system, and so we do not need to consider  the case where $\mathcal{W}_i=0$, and (ii) if a receiver $i$ knows no message bit, we can arbitrarily assign a new dummy bit $x'$ to it and to the sender (of course, that bit will never be sent by the sender), and so we do not consider the case where $\mathcal{K}_i=0$.

\subsection{Classification}

A few classes of index coding problems have been studied. We propose to categorize these and other index coding problems as follows. We first classify different types of information flow from the sender to the receivers. We say that an index coding problem is {\em unicast} if
\gap
\begin{equation}
\mathcal{W}_i \cap \mathcal{W}_j = \emptyset, \quad \forall i \neq j, \label{eq:unicast}
\end{equation}
meaning that each message bit can be requested by at most one receiver.
 In addition, we say that the problem is {\em single-unicast} if, in addition to \eqref{eq:unicast}, we also have that $|\mathcal{W}_i| = 1$ for all $i$, meaning each receiver request for exactly one unique bit.

We next classify different types of side information at the receivers. We say that an index coding problem is {\em uniprior} if
\gap
\begin{equation}
\mathcal{K}_i \cap \mathcal{K}_j = \emptyset, \quad \forall i \neq j, \label{eq:uniprior}
\end{equation}
meaning that each bit is known a priori to at most one receiver. In addition, we say that the problem is {\em single-uniprior} if, in addition to \eqref{eq:uniprior}, we also have that $|\mathcal{K}_i| = 1$ for all $i$, meaning that each receiver knows exactly one unique bit a priori.

With the above terminology, we canonically term the general index coding problems (i.e., no restriction on all $\mathcal{W}_i$ and $\mathcal{K}_i$)  {\em multicast/multiprior} problems.

\subsection{Different Classes of Index Coding Problems}

\subsubsection{Single-Unicast/Multiprior (or simple Single-Unicast)}

Birk and Kol~\cite{birkkol2006} and Bar-Yossef et al.~\cite{baryossefbirk11} studied index coding problems with single-unicast and multiprior.
In this setting, each receiver wants one unique message bit, and there is no restriction on how many messages the receivers know a priori.
So, we have that $R_i = (x_i, \mathcal{K}_i)$ for all $i$, and $m=n$. Bar-Yossef et al. represent single-unicast problems by {\em side-information graphs} with $n$ vertices $\{1,2,\dotsc,n\}$, where an edge exists from vertex $i$ to vertex $j$ if and only if receiver $i$ knows $x_j$ a priori, i.e., $x_j \in \mathcal{K}_i$.

\subsubsection{Single-Uniprior/Multicast (or simply Single-Uniprior)}

In this paper, we consider index coding problems with single-uniprior and multicast, where each receiver knows only one message bit (in contrast to the above single-unicast problems where each receiver wants one message), and there is no restriction on how many messages each receiver wants. Here, we have that $R_i = (\mathcal{W}_i, x_i)$ for all $i$, and $m=n$. Side-information graphs used to represent single-unicast problems cannot capture all single-uniprior problems. So, we represent single-uniprior problems with {\em information-flow graphs} of $n$ vertices, but an arc exists from vertex $i$ to vertex $j$ if and only if node $j$ wants $x_i$, i.e., $x_i \in \mathcal{W}_j$. The single-uniprior problem is motivated by the bidirectional relaying network, which will be discussed later.

\begin{figure}[t]
\centering
\resizebox{3cm}{!}{ 
\begin{picture}(0,0)%
\includegraphics{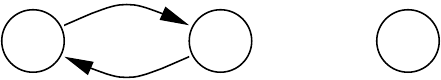}%
\end{picture}%
\setlength{\unitlength}{3947sp}%
\begingroup\makeatletter\ifx\SetFigFont\undefined%
\gdef\SetFigFont#1#2#3#4#5{%
  \fontsize{#1}{#2pt}%
  \fontfamily{#3}\fontseries{#4}\fontshape{#5}%
  \selectfont}%
\fi\endgroup%
\begin{picture}(2116,374)(743,-1148)
\put(2626,-1036){\makebox(0,0)[lb]{\smash{{\SetFigFont{12}{14.4}{\familydefault}{\mddefault}{\updefault}{\color[rgb]{0,0,0}$3$}%
}}}}
\put(826,-1036){\makebox(0,0)[lb]{\smash{{\SetFigFont{12}{14.4}{\familydefault}{\mddefault}{\updefault}{\color[rgb]{0,0,0}$1$}%
}}}}
\put(1726,-1036){\makebox(0,0)[lb]{\smash{{\SetFigFont{12}{14.4}{\familydefault}{\mddefault}{\updefault}{\color[rgb]{0,0,0}$2$}%
}}}}
\end{picture}%

}
\caption{A directed graph representing index coding problems}
\label{fig:graph}
\end{figure}

\subsubsection{Side-information Graphs versus Information-Flow Graphs}

Consider the directed graph in Fig.~\ref{fig:graph}. On the one hand, if it is a side-information graph, we have the following single-unicast problem: $R_1 = (x_1,x_2)$, $R_2=(x_2,x_1)$, and $R_3=(x_3, x')$, where $x'$ is a dummy bit known to receiver 3 and the sender. An optimal index code is $(x_1 \oplus x_2, x_3)$, i.e. $\ell(^*\mathcal{M}, \mathcal{R}) = 2$. On the other hand, if the graph is an information-flow graph, we have the following single-uniprior problem: $R_1 = (x_2,x_1)$, $R_2=(x_1,x_2)$, and $R_3=(\emptyset, x_3)$. In this case, receiver 3 can be removed. An optimal index code is $(x_1 \oplus x_2)$, i.e.,  $\ell^*(\mathcal{M}, \mathcal{R}) = 1$. We note that designating a given directed graph as a side-information graph or an information-flow graph can lead to different index coding problems and hence possibly different optimal codes.


\subsubsection{Unicast/Uniprior}

The class of index coding problems with unicast and uniprior was investigated by Neely et al.~\cite{neely11}, where (i) each message bit is known to only one receiver, and (ii) each message bit is requested by only one receiver. However, there is no restriction on the number of message bits each receiver knows or requests. Unicast/uniprior problems can be represented by (modified) information-flow graphs.

\begin{figure}[t]
\centering
\resizebox{0.92\linewidth}{!}{ 
\begin{picture}(0,0)%
\includegraphics{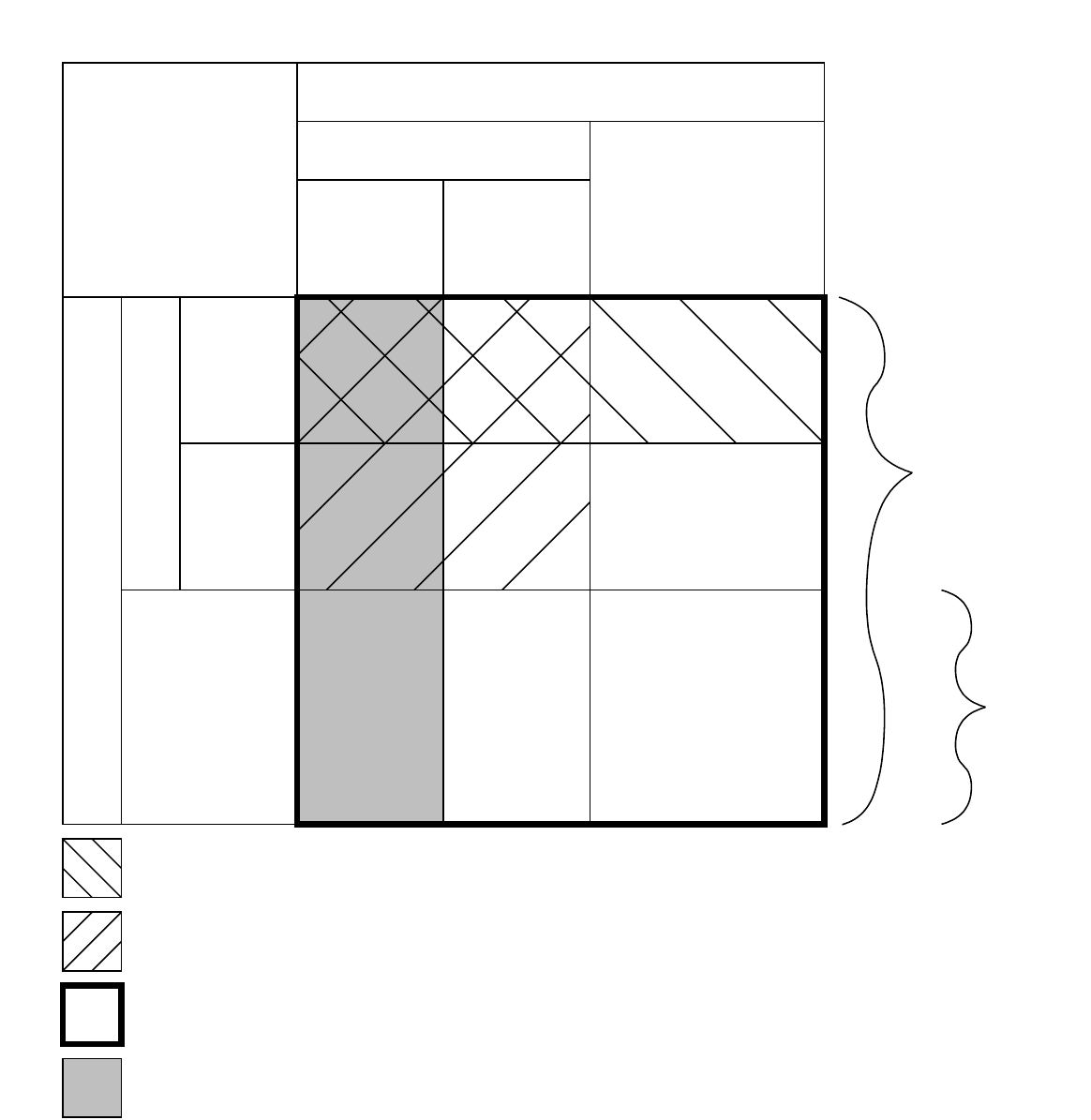}%
\end{picture}%
\setlength{\unitlength}{3947sp}%
\begingroup\makeatletter\ifx\SetFigFont\undefined%
\gdef\SetFigFont#1#2#3#4#5{%
  \reset@font\fontsize{#1}{#2pt}%
  \fontfamily{#3}\fontseries{#4}\fontshape{#5}%
  \selectfont}%
\fi\endgroup%
\begin{picture}(5570,5733)(729,-5323)
\put(1276,-3511){\rotatebox{90.0}{\makebox(0,0)[lb]{\smash{{\SetFigFont{12}{14.4}{\familydefault}{\mddefault}{\updefault}{\color[rgb]{0,0,0}multicast  (most general)}%
}}}}}
\put(2551,-136){\makebox(0,0)[lb]{\smash{{\SetFigFont{12}{14.4}{\familydefault}{\mddefault}{\updefault}{\color[rgb]{0,0,0}multiprior  (most general)}%
}}}}
\put(1576,-2086){\rotatebox{90.0}{\makebox(0,0)[lb]{\smash{{\SetFigFont{12}{14.4}{\familydefault}{\mddefault}{\updefault}{\color[rgb]{0,0,0}unicast}%
}}}}}
\put(2626,-436){\makebox(0,0)[lb]{\smash{{\SetFigFont{12}{14.4}{\familydefault}{\mddefault}{\updefault}{\color[rgb]{0,0,0}uniprior}%
}}}}
\put(2101,-1786){\rotatebox{90.0}{\makebox(0,0)[lb]{\smash{{\SetFigFont{12}{14.4}{\familydefault}{\mddefault}{\updefault}{\color[rgb]{0,0,0}unicast}%
}}}}}
\put(2326,-961){\makebox(0,0)[lb]{\smash{{\SetFigFont{12}{14.4}{\familydefault}{\mddefault}{\updefault}{\color[rgb]{0,0,0}uniprior}%
}}}}
\put(1876,-1711){\rotatebox{90.0}{\makebox(0,0)[lb]{\smash{{\SetFigFont{12}{14.4}{\familydefault}{\mddefault}{\updefault}{\color[rgb]{0,0,0}single}%
}}}}}
\put(2401,-736){\makebox(0,0)[lb]{\smash{{\SetFigFont{12}{14.4}{\familydefault}{\mddefault}{\updefault}{\color[rgb]{0,0,0}single}%
}}}}
\put(1501,-4111){\makebox(0,0)[lb]{\smash{{\SetFigFont{12}{14.4}{\familydefault}{\mddefault}{\updefault}{\color[rgb]{0,0,0}single-unicast (partially solved~\cite{baryossefbirk11})}%
}}}}
\put(1501,-4486){\makebox(0,0)[lb]{\smash{{\SetFigFont{12}{14.4}{\familydefault}{\mddefault}{\updefault}{\color[rgb]{0,0,0}unicast/uniprior (partially solved~\cite{neely11})}%
}}}}
\put(2476,239){\makebox(0,0)[lb]{\smash{{\SetFigFont{12}{14.4}{\familydefault}{\mddefault}{\updefault}{\color[rgb]{0,0,0}side information (at clients)}%
}}}}
\put(901,-3736){\rotatebox{90.0}{\makebox(0,0)[lb]{\smash{{\SetFigFont{12}{14.4}{\familydefault}{\mddefault}{\updefault}{\color[rgb]{0,0,0}information flow (from sender)}%
}}}}}
\put(6226,-3661){\rotatebox{90.0}{\makebox(0,0)[lb]{\smash{{\SetFigFont{12}{14.4}{\familydefault}{\mddefault}{\updefault}{\color[rgb]{0,0,0}approximate}%
}}}}}
\put(1501,-4861){\makebox(0,0)[lb]{\smash{{\SetFigFont{12}{14.4}{\familydefault}{\mddefault}{\updefault}{\color[rgb]{0,0,0}multicast/multiprior (partially solved~\cite{neely11})}%
}}}}
\put(1501,-5236){\makebox(0,0)[lb]{\smash{{\SetFigFont{12}{14.4}{\familydefault}{\mddefault}{\updefault}{\color[rgb]{0,0,0}single-uniprior (completely solved in this paper)}%
}}}}
\put(6001,-3661){\rotatebox{90.0}{\makebox(0,0)[lb]{\smash{{\SetFigFont{12}{14.4}{\familydefault}{\mddefault}{\updefault}{\color[rgb]{0,0,0}NP-hard to}%
}}}}}
\put(5626,-2311){\rotatebox{90.0}{\makebox(0,0)[lb]{\smash{{\SetFigFont{12}{14.4}{\familydefault}{\mddefault}{\updefault}{\color[rgb]{0,0,0}NP-hard}%
}}}}}
\end{picture}%

}
\caption{Summary of main result and related results}
\label{fig:result}
\end{figure}

\subsection{Existing Results and New Results}

For single-unicast index coding problems, Bar-Yossef et al.~\cite{baryossefbirk11} found optimal index codes for problems that can be represented by the following types of side-information graphs: (i) acyclic graphs, (ii) perfect graphs, (iii) odd holes, and (iv) odd anti-holes. This means the corresponding classes of single-unicast problems were solved. In addition, {\em linear} index codes are optimal for these problems. Lubetzky and Stav~\cite{lubertzkystav09}, however, show that non-linear index codes can outperform linear codes for some single-unicast problems.

Neely et al.~\cite{neely11} solved the class of unicast/uniprior problems where the corresponding (modified) information-flow graphs have disjoint cycles. In addition, they found the optimal index codes for the general multicast/multiprior index coding problem where the corresponding bipartite graphs\footnote{Neither the side-information graph nor the information-flow graph is sufficient to represent the general index coding problem. So, bipartite graphs are used to represent multicast/multiprior index coding problems.} is acyclic.

It has been shown~\cite{chaudhry11} that (i) the general multicast/multiprior index coding problem is NP-hard, and (ii) the multicast (non-unicast) index problem is even NP-hard to approximate.

In this paper, we solve the single-uniprior (multicast) index coding problem, and show that the solution can be found in polynomial time. The result of this paper in relation to existing results is summarized in Fig.~\ref{fig:result}.

More specifically, in this paper, we construct an optimal index code for any single-uniprior index coding problem, which can be represented by an information-flow graph. 
In addition, the optimal index codes that we construct are linear. Hence, we incidentally show that linear index codes are optimal for all single-uniprior problems.

\begin{remark}
To the best of our knowledge, the single-uniprior problem is the only class of index coding problems where the solution is found for any configuration of information flow and side information. Furthermore, in contrast to the single-unicast problems where non-linear codes can outperform linear codes~\cite{lubertzkystav09}, we show that linear codes are optimal for the single-uniprior problems.
\end{remark}

\subsection{Motivating the Single-Uniprior Index Coding Problem}

The single-uniprior index coding problem formulation is motivated by satellite communications~\cite{gunduzyener09,ongmjohnsonit11}, where multiple clients exchange messages through a relay (i.e., the satellite). Each clients wish to send its data to a predefined set of other clients. As there is no direct communication link among the clients, they first send their respective messages to the relay on the {\em uplink} channel. The relay then processes the messages and broadcasts a re-encoded message to the clients on the {\em downlink} channel. Assuming that the relay has obtained all the messages from the clients on the uplink, we wish to know the minimum number of bits the relay needs to send on the downlink in order for each client to obtain it requested messages. This is exactly the single-uniprior index coding problem, where each receiver is interested in the messages from different receiver(s), and it only knows its own message a priori.

\subsection{Information-Flow Graph}

As mentioned, we represent single-uniprior problems using information-flow graphs. Let the set of all vertices be $\mathcal{V} = \{1,2,\dotsc,n\}$. An arc denoted by an ordered pair of vertices $a=(i,j)$ exists if receiver $j$ wants the message $x_i$, i.e., $x_i \in \mathcal{W}_j$. We call vertex $i$ the tail and $j$ the head of the arc $(i,j)$. We denote the graph, by $\mathcal{G}=(\mathcal{V}, \mathcal{A})$, where $\mathcal{A}$ is the set of all arcs.
Let the number of vertices and arcs be $V(\mathcal{G}) \triangleq |\mathcal{V}|$ and $A(\mathcal{G}) \triangleq |\mathcal{A}|$, respectively.

Any information-flow graph $\mathcal{G}$ has the following structure: (i) $\mathcal{G}$ is a {\em simple graph}, i.e., the head and tail of every edge are distinct (otherwise the receiver requests for the message it already knows). (ii) At most one arc exists between any ordered pair of vertices (single-uniprior and the messages are of the same size). (iii) Each vertex is connected to at least one other vertex via an arc (otherwise, we can remove the vertex [say vertex $i$], as $x_i$ is not requested by any receiver and receiver $i$ does not want any message). Condition (iii) implies that there are at least two vertices and at least one arc on $\mathcal{G}$, i.e., $V(\mathcal{G}) \geq 2$ and $A(\mathcal{G}) \geq 1$.

We list the relevant standard definitions in graph theory~\cite{Bang-JensenGutin}:
\begin{itemize}
\item For a vertex $i$, $a=(j,i)$ is an \emph{incoming arc}, and $a=(i,j)$ an \emph{outgoing arc}.
The number of outgoing arcs is the {\em outdegree}  $\od$ of the vertex.
The maximum number of outgoing arcs of a graph is its {\em maximum outdegree} $\maxod$.

\item A \emph{trail} $\mathcal{T}$ is a non-empty graph of the form
$( \{ k_1,k_2,\dotsc,k_K\}, \{ (k_1,k_2), (k_2,k_3), \dotsc, (k_{K-1},k_K)\})$,
where arcs $(k_i,k_{i+1})$ are all distinct. Vertex $k_1$ is the tail of the trail, $k_K$ the head, and $\{k_2, k_3, \dotsc, k_{K-1}\}$ the {\em inner vertices}.
\item A \emph{path} $\mathcal{P}$ is a trail where the vertices are all distinct.
The vertex $k_1$ is the tail of the path and $k_K$ is the head of the trail, and we say that the path is from $k_1$ to $k_K$.
\item A \emph{cycle} is a path with an additional arc $(k_K,k_1)$.
\item A graph is {\em acyclic} if it does not contain any cycle.
\item A graph $\mathcal{G}'=(\mathcal{V}',\mathcal{A}')$ is a subgraph of a graph $\mathcal{G}=(\mathcal{V},\mathcal{A})$, denoted $\mathcal{G}' \subseteq \mathcal{G}$,  if $\mathcal{V}' \subseteq \mathcal{V}$ and $\mathcal{A}' \subseteq \mathcal{A}$. We say that the subgraph $\mathcal{G}'$ is \emph{on} the graph $\mathcal{G}$. Moreover, $\mathcal{G}'$ is a {\em strict subgraph} if $\mathcal{V}' \subset \mathcal{V}$ or $\mathcal{A}' \subset \mathcal{A}$.
\item A graph $\mathcal{V}$ is {\em strongly connected} 
if there exists a path for every distinct ordered pair of receivers.
\item The {\em strongly connected components} of a graph are its maximal strongly connected subgraphs.
\item For two graphs $\mathcal{G}_1 = (\mathcal{V}_1, \mathcal{A}_1)$ and $\mathcal{G}_2 = (\mathcal{V}_2, \mathcal{A}_2)$, we define: $\mathcal{G}_1 \cup \mathcal{G}_2 \triangleq ( \mathcal{V}_1 \cup \mathcal{V}_2, \mathcal{A}_1 \cup \mathcal{A}_2)$, $\mathcal{G}_1 \cap \mathcal{G}_2 \triangleq ( \mathcal{V}_1 \cap \mathcal{V}_2, \mathcal{A}_1 \cap \mathcal{A}_2)$, and $\mathcal{G}_1 \setminus \mathcal{G}_2 \triangleq (\mathcal{V}_1 \setminus \mathcal{V}_2, \mathcal{A}_1 \setminus \mathcal{A}_2)$.
\end{itemize}

\subsection{Main Idea} \label{sec:graph-compare}

Since there is a one-to-one mapping between a single-uniprior index coding problem $(\mathcal{M},\mathcal{R})$ and its corresponding information-flow graph $\mathcal{G}$, we define $\ell^*(\mathcal{G}) \triangleq \ell^*(\mathcal{M},\mathcal{R})$.

\begin{lemma}\label{lem:basic}
Let $\ell$ be the length of an index code for the single-uniprior problem represented by $\mathcal{G}$.
If 
$\mathcal{G}' = (\mathcal{V}, \mathcal{A}') \subseteq \mathcal{G} = (\mathcal{V}, \mathcal{A})$, i.e., $\mathcal{G}'$ and $\mathcal{G}$ have the same vertices, but $\mathcal{A}' \subseteq \mathcal{A}$,
then $\ell^* (\mathcal{G}') \leq  \ell^* (\mathcal{G}) \leq \ell$.
\end{lemma}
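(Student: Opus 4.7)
The plan is to prove the two inequalities separately, observing that both are essentially monotonicity statements. The right-hand inequality $\ell^*(\mathcal{G}) \leq \ell$ is immediate: by definition $\ell^*(\mathcal{G})$ is the smallest codelength for which a valid encoder/decoder pair exists for the problem represented by $\mathcal{G}$, so any particular valid index code of length $\ell$ provides an upper bound. I would simply cite the definition and move on.

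For the left-hand inequality $\ell^*(\mathcal{G}') \leq \ell^*(\mathcal{G})$, the strategy is to reuse an optimal index code for $\mathcal{G}$ on the subgraph problem. Let $(E^*, \{D^*_i\}_{i=1}^n)$ be an optimal index code for the single-uniprior problem specified by $\mathcal{G}$, so that $E^* : \{0,1\}^m \to \{0,1\}^{\ell^*(\mathcal{G})}$ and each $D^*_i$ correctly recovers $\mathcal{W}_i$ from $(E^*(\mathcal{M}), x_i)$. Because $\mathcal{G}'$ has the same vertex set $\mathcal{V}$ as $\mathcal{G}$, the side-information sets in the $\mathcal{G}'$-problem are still $\mathcal{K}'_i = \{x_i\}$, identical to those in the $\mathcal{G}$-problem. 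Because $\mathcal{A}' \subseteq \mathcal{A}$, the wanted set $\mathcal{W}'_i$ of each receiver in the $\mathcal{G}'$-problem satisfies $\mathcal{W}'_i \subseteq \mathcal{W}_i$.

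Given these observations, I would construct a valid index code for $\mathcal{G}'$ of length $\ell^*(\mathcal{G})$ by keeping the encoder $E^*$ and defining, for each receiver $i$, a decoder $D'_i$ that first applies $D^*_i$ to recover $\mathcal{W}_i$ and then projects onto the coordinates corresponding to $\mathcal{W}'_i \subseteq \mathcal{W}_i$. Correctness of $D'_i$ follows from the correctness of $D^*_i$. Since this exhibits a valid index code for $\mathcal{G}'$ of length $\ell^*(\mathcal{G})$, the optimal length $\ell^*(\mathcal{G}')$ can be no larger, giving $\ell^*(\mathcal{G}') \leq \ell^*(\mathcal{G})$.

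I do not foresee any genuine obstacle here; the result is a straightforward monotonicity of the optimal codelength under removal of "wants" arcs. The only care needed is to articulate cleanly why the side information does not change when we pass from $\mathcal{G}$ to $\mathcal{G}'$ (it is because the subgraph condition in the lemma keeps $\mathcal{V}$ fixed and single-uniprior ties $\mathcal{K}_i$ to the vertex $i$ itself, not to the arcs), so that the same encoder together with restricted decoders gives a valid code on $\mathcal{G}'$.
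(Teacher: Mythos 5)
Your proof is correct and follows essentially the same approach as the paper's; the paper states the second inequality as a monotonicity consequence of adding decoding requirements, and you simply make this explicit by constructing the restricted decoders and noting that the side-information sets $\mathcal{K}_i = \{x_i\}$ are tied to vertices (which are preserved) rather than arcs.
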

\begin{proof}
By definition, any index code must satisfy $\ell \geq \ell^* (\mathcal{G})$.
If we add additional decoding requirements at the receivers (i.e., arcs on graphs), the sender cannot transmit fewer bits, i.e, $\ell^* (\mathcal{G}') \leq  \ell^* (\mathcal{G})$. 
\end{proof}

The main idea to prove the results in this paper is captured in Lemma~\ref{lem:basic}. We find a lower bound for $\ell^*(\mathcal{G})$ by choosing an appropriate $\mathcal{G}'$ (using our proposed {\em pruning algorithm}) where $\ell^*(\mathcal{G}')$ can be easily obtained. We then show that we can always construct an index code for the {\em original graph} $\mathcal{G}$ with $\ell  = \ell^* (\mathcal{G}') $. With this, we establish $\ell^* (\mathcal{G})$. Note that constructing an index code for $\mathcal{G}'$ is insufficient here---we need to construct an index code for $\mathcal{G}$ although the lower bound is obtained based on $\mathcal{G}'$.

We first consider two special classes of graphs in Section~\ref{section:free} and Section~\ref{section:strong}, which are used as building blocks for the general results (arbitrary graphs) in Section~\ref{section:general}.

\section{Acyclic Graphs with $\maxod=1$ } \label{section:free}

As mentioned above, for any graph $\mathcal{G}$, we will first prune $\mathcal{G}$ to get $\mathcal{G}'$ for which $\ell^*(\mathcal{G}')$ can be obtained easily. More specifically, $\mathcal{G}'$ is an acyclic graph with $\maxod=1$. In this section, we establish the optimal codelength for any acyclic graph $\mathcal{G}'$ with $\maxod=1$, i.e., $\ell^*(\mathcal{G}')$. In the subsequent sections, we will then show how to choose an appropriate $\mathcal{G}'$ such that we can construct an index code for $\mathcal{G}$ with $\ell = \ell^*(\mathcal{G}')$.

In this section, we will prove the following theorem:
\begin{theorem} \label{theorem:1}
For a single-uniprior problem represented by an acyclic graph $\mathcal{G}$ with $\maxod=1$, we have
\gap
\begin{equation}
\ell^*(\mathcal{G}) = A(\mathcal{G}). \label{eq:ub}
\end{equation}
\end{theorem}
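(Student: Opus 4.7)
The plan is to prove both directions $\ell^*(\mathcal{G}) \leq A(\mathcal{G})$ and $\ell^*(\mathcal{G}) \geq A(\mathcal{G})$. Achievability is immediate: let $T \triangleq \{i \in \mathcal{V} : \od(i) = 1\}$ be the set of tails. Because $\maxod = 1$, each arc has a distinct tail in $T$, so $|T| = A(\mathcal{G})$. The sender simply transmits the $|T|$-bit string $(x_i)_{i \in T}$, and every receiver $j$ with an incoming arc $(i,j)$ reads off $x_i$ directly. This yields $\ell^*(\mathcal{G}) \leq A(\mathcal{G})$, and the code is clearly linear.

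The main work is the converse $\ell \geq A(\mathcal{G})$ for any valid encoder $E\colon \{0,1\}^n \to \{0,1\}^\ell$. The key structural fact is that an acyclic graph with $\maxod = 1$ is a forest of in-trees: iterating the unique outgoing arc from any $i \in T$ yields a directed path that, by acyclicity, must terminate at a root of outdegree zero. I would let $d(i)$ denote the length of that path, so that $d(i) \geq 1$ iff $i \in T$. I then restrict $E$ to the $2^{|T|}$-element slice on which $x_i$ is held at $0$ for every non-tail $i$, and show that $E$ is injective on this slice; this immediately forces $2^\ell \geq 2^{|T|}$, i.e., $\ell \geq A(\mathcal{G})$.

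To establish the injectivity claim, I would argue by contradiction. Suppose two distinct assignments $\mathbf{x}, \mathbf{x}'$ in the slice satisfy $E(\mathbf{x}) = E(\mathbf{x}')$, and set $S \triangleq \{i : x_i \neq x_i'\} \subseteq T$. Choose $j \in S$ with minimum $d(j)$ and let $p$ be the head of its unique outgoing arc. Then $d(p) = d(j) - 1$, so $p \notin S$ and $x_p = x_p'$. Since $\mathcal{G}$ is simple, $p \neq j$, so receiver $p$---which requests $x_j$ and whose side information is exactly $\{x_p\}$---must return the same decoded bit for both $\mathbf{x}$ and $\mathbf{x}'$, contradicting $x_j \neq x_j'$.

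The delicate point, and the one I expect to be the main obstacle, is the simultaneous choice of the slice (zeroing out non-tail messages) together with a minimum-depth element of $S$: these two choices are exactly what is required to force the side information at receiver $p$ to coincide across $\mathbf{x}$ and $\mathbf{x}'$ while the bit it is asked to decode disagrees. Acyclicity is what makes $d$ well-defined and guarantees that any minimum-depth element of $S$ has its parent outside $S$, while $\maxod = 1$ is what makes that parent unique; both hypotheses of the theorem are therefore used exactly once in the argument. Everything else---the arc count $|T| = A(\mathcal{G})$, the passage from injectivity to a bound on $\ell$, and the direct construction for the upper bound---is routine.
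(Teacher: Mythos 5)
Your achievability direction is identical to the paper's: send the raw bit $x_i$ for each tail $i$, one bit per arc, for a total of $A(\mathcal{G})$ bits. Your converse, however, takes a genuinely different route. The paper proves $\ell \geq A(\mathcal{G})$ by an entropy chain-rule calculation: it fixes an acyclic ordering of the vertices so that $i > r(i)$ for every arc $(z_i, z_{r(i)})$, expands $A(\mathcal{G}) = H(X_{\mathcal{Z}^\mathrm{c}} \mid X_{\mathcal{Z}})$ term by term, and uses ``conditioning reduces entropy'' together with zero-error decoding to collapse the residual entropies and land on $H(E(\mathcal{M})) \leq \ell$. You instead run a fooling-set (counting) argument: fix all non-tail message bits to $0$, obtaining a slice of $2^{A(\mathcal{G})}$ inputs, and show $E$ must be injective there. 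Your injectivity proof, which selects a minimum-depth disagreeing tail $j$ and observes that the head $p$ of its arc has $d(p) = d(j) - 1$ and hence lies outside the disagreement set, is exactly the combinatorial counterpart of the paper's use of the acyclic ordering: both establish that the receiver of each requested bit can be ``placed before'' the requester, so that its side information is already pinned down. Your version is more elementary and arguably makes the separate roles of acyclicity (well-defined depth $d$, parent has strictly smaller depth) and $\maxod = 1$ (unique parent) more transparent; the paper's entropy formulation is the one that would adapt most readily to variants with approximate or vanishing-error decoding. The proof is correct and complete.
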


The proof of Theorem~\ref{theorem:1} can be found in the appendix. 

\begin{remark}\label{rem:1}
The arcs in $\mathcal{G}$ represent all the message bits requested by the receivers.
Since there are only $A(\mathcal{G})$ unique bits requested, the sender simply sends these bits in their entirety, i.e., uncoded using time-division multiple-access (TDMA).
\end{remark}

\section{Strongly Connected Graphs} \label{section:strong}

Next, we consider strongly connected graphs. We will show the following:
\begin{theorem}\label{theorem:strong}
For a single-uniprior problem represented by a strongly connected graph $\mathcal{G}$, we have
\gap
\begin{equation}
\ell^*(\mathcal{G}) = V(\mathcal{G})-1. \label{eq:strong-capacity}
\end{equation}
\end{theorem}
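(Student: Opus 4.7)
The plan is to prove matching upper and lower bounds of $V(\mathcal{G})-1$. For the upper bound I would give an explicit linear code that does not even use the arc structure of $\mathcal{G}$: label the vertices $1,2,\dotsc,n$ with $n=V(\mathcal{G})$, and broadcast the $n-1$ parity bits $x_i\oplus x_{i+1}$ for $i=1,\dotsc,n-1$. Since each receiver $j$ knows $x_j$ a priori, it can recover $x_{j-1},x_{j-2},\dotsc,x_1$ by XOR-chaining leftward and $x_{j+1},\dotsc,x_n$ by XOR-chaining rightward, so it decodes every message in $\mathcal{M}$ and in particular $\mathcal{W}_j$. This already establishes $\ell^*(\mathcal{G})\le V(\mathcal{G})-1$.

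All the real work is in the matching lower bound, and the plan is to apply Lemma~\ref{lem:basic} with a subgraph $\mathcal{G}'\subseteq\mathcal{G}$ on the same vertex set for which Theorem~\ref{theorem:1} directly gives $\ell^*(\mathcal{G}')=V(\mathcal{G})-1$. The right object is a spanning \emph{in-arborescence} rooted at some vertex $r$: fix any $r\in\mathcal{V}$, and using strong connectedness, for every $v\ne r$ choose a path in $\mathcal{G}$ from $v$ to $r$ and retain only its first arc. The resulting $\mathcal{G}'=(\mathcal{V},\mathcal{A}')$ satisfies $\mathcal{A}'\subseteq\mathcal{A}$ and gives outdegree exactly $1$ at every non-root vertex and $0$ at $r$, so $\maxod(\mathcal{G}')=1$ and $A(\mathcal{G}')=V(\mathcal{G})-1$; acyclicity follows because along each retained arc the distance to $r$ strictly decreases, precluding any closed walk.

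Theorem~\ref{theorem:1} then yields $\ell^*(\mathcal{G}')=A(\mathcal{G}')=V(\mathcal{G})-1$, and Lemma~\ref{lem:basic} upgrades this to $\ell^*(\mathcal{G})\ge\ell^*(\mathcal{G}')=V(\mathcal{G})-1$, matching the upper bound. The only nontrivial step is extracting the spanning in-arborescence as a genuine subgraph of $\mathcal{G}$ with the required outdegree and acyclicity properties — a classical fact for strongly connected digraphs — so I expect no real obstacle beyond writing it cleanly; the XOR-chain construction and the invocations of Lemma~\ref{lem:basic} and Theorem~\ref{theorem:1} are all immediate.
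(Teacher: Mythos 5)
Your argument is correct in substance and reaches the lower bound by a genuinely different and more standard route than the paper's. The paper constructs the required acyclic spanning subgraph with $\maxod=1$ and $V(\mathcal{G})-1$ arcs via a bespoke two-stage device: the Strongly Connected Graph Construction (build $\mathcal{G}$ from an initial cycle by iteratively appending trails) and the Reverse SCGC Pruning Algorithm (delete the first arc of each appended trail, plus one arc of the initial cycle), then verifies the acyclicity, $\maxod=1$, and arc-count properties by induction over the construction. You instead appeal to the classical fact that every strongly connected digraph admits a spanning in-arborescence rooted at any chosen $r$, which immediately has all three properties and plugs straight into Lemma~\ref{lem:basic} and Theorem~\ref{theorem:1}. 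The upper bound (the XOR chain of consecutive message pairs) is identical to the paper's. Your route is shorter and cleaner for this theorem in isolation; the paper's SCGC machinery (Lemmas~\ref{prop:cycle-tree-free} and~\ref{prop:number-of-arcs}) is what gets reused in the general-graph proof of Theorem~\ref{theorem:general}, which is presumably why the authors set it up this way.

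One slip to fix in your in-arborescence construction. As written, ``for every $v\neq r$ choose a path from $v$ to $r$ and retain only its first arc'' does not by itself force acyclicity: if the per-vertex paths are chosen independently, the retained first arcs can close a cycle. For instance, on $\mathcal{V}=\{r,a,b\}$ with arcs $(r,a),(a,b),(b,a),(a,r),(b,r)$, picking the path $a\to b\to r$ for $a$ and $b\to a\to r$ for $b$ retains $(a,b)$ and $(b,a)$, a $2$-cycle. Your acyclicity justification — that the distance to $r$ strictly decreases along each retained arc — tacitly assumes the paths are \emph{shortest} paths (equivalently, that the retained arcs come from a single BFS shortest-path tree toward $r$, i.e.\ an actual spanning in-arborescence). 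Inserting the word ``shortest,'' or directly invoking the existence of a spanning in-arborescence rooted at $r$ in a strongly connected digraph, closes the gap; the rest of the argument is sound.
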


\begin{remark}\label{rem:2}
Consider the index coding problem represented by a  strongly connected graph. Since every vertex must have an outgoing arc for the graph to be strongly connected, there are in total $V(\mathcal{G})$ unique bits requested by the receivers. If we use the uncoded TDMA scheme, then the receiver needs to transmit $V(\mathcal{G})$ bits, which is strictly sub-optimal. 
\end{remark}

We present the proof of Theorem~\ref{theorem:strong} in the following two subsections.
While the coding scheme is relatively simple (using network coding), the challenge is to show that the sender cannot send less than  $(V(\mathcal{G})-1)$ bits. 

\subsection{Achievability (Upper Bound)} \label{sec:achievability}
We now propose a coding scheme that achieves $\ell = V(\mathcal{G})-1$. Recall that the set of vertices $\mathcal{V}=\{1,2,\dotsc,V(\mathcal{G})\}$ on $\mathcal{G}$ represent the receivers of the uniprior problem. Define $x_{i,j} \triangleq x_i \oplus x_j$, where $\oplus$ is the XOR operation. Now, let $\boldsymbol{x} = \big( x_{1,2}, x_{2,3}, \dotsc, x_{V(\mathcal{G})-2, V(\mathcal{G})-1}, x_{V(\mathcal{G})-1, V(\mathcal{G})} \big)$, which is a binary vector of length $(V(\mathcal{G})-1)$. The sender broadcasts $\boldsymbol{x}$. Note that each receiver $i$ knows $x_i$ a priori, for all $i \in \{1,2,\dotsc, V(\mathcal{G})\}$. Together with $\boldsymbol{x}$ received from the sender, receiver $i$ can decode all $\big\{x_j: j \in \{1,2,\dotsc, V(\mathcal{G})\} \setminus \{i\} \big\}$. So, we have $\ell^*(\mathcal{G}) \leq \ell = V(\mathcal{G})-1$. \hfill $\blacksquare$


\begin{remark}
This coding scheme also allows each receiver to decode all the message bits, besides the bit(s) it requested.
\end{remark}

\subsection{Lower Bound}

To obtain a lower bound on $\ell^*(\mathcal{G})$, we will construct an algorithm that prunes some arcs from $\mathcal{G}$ to obtain an acyclic graph with $\maxod=1$, say $\mathcal{G}''$, such that  $A(\mathcal{G}'') = V(\mathcal{G})-1$. From Lemma~\ref{lem:basic} we have that $\ell^*(\mathcal{G}) \geq \ell^*(\mathcal{G}'')$.  From Theorem~\ref{theorem:1} that applies to $\mathcal{G}''$, we have that  $\ell^*(\mathcal{G}'') =A(\mathcal{G}'')$. Hence, $\ell^*(\mathcal{G}) \geq  V(\mathcal{G})-1$.

\subsubsection{Graph Construction}

We start with a way to construct any strongly connected graph $\mathcal{G} = (\mathcal{V},\mathcal{A})$. In a strongly connected graph, there is a path from any vertex to another vertex (and vice versa). Thus $\mathcal{G}$ must contain at least one cycle.

\underline{Strongly Connected Graph Construction (SCGC):}
\begin{enumerate}
\item \label{step:add-path-1} Initialization: pick a cycle $\mathcal{C}$ on $\mathcal{G}$, and initialize $\mathcal{G}' = (\mathcal{V}',\mathcal{A}') \leftarrow \mathcal{C}$. 
\item \label{step:add-path-2} Iteration: pick a length $(K-1)$ trail on $\mathcal{G}$, denoted as $\mathcal{T}=( \{k_1, \dotsc, k_K\}, \{(k_1,k_2), \dotsc, (k_{K-1},k_K)\} )$,  to be either
\\
(i)  a path $\mathcal{P}$ where $K\geq 2$ and $k_1\neq k_K$, or
\\
(ii) a cycle $\mathcal{C}$ where $K\geq 3$ and $k_1=k_K$, \\
such that the tail and head satisfy $k_1, k_K\in \mathcal{V}'$ and the {\em inner vertices}, if present, are distinct and satisfy $k_i \in \mathcal{V}\setminus \mathcal{V'}, \forall i \in \{2,3,\dotsc, K-1\}$.
 We call $(k_1,k_2)$ the \emph{first arc} of $\mathcal{T}$.
The iteration terminates if such a trail cannot be found.

\item \label{step:add-path-3} Update: $\mathcal{G}' \leftarrow \mathcal{G}' \cup\mathcal{T}$.
We say the trail is {\em appended} to the graph. Go to Step~\ref{step:add-path-2}.
\end{enumerate}

\begin{lemma}\label{lem:strong}
Every iteration in Step~\ref{step:add-path-3} of the SCGC produces a strongly connected graph $\mathcal{G}'$ that is a subgraph of $\mathcal{G}$.
\end{lemma}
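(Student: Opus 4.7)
The plan is to prove the lemma by induction on the number of iterations executed by the SCGC, with the invariant being that at every stage $\mathcal{G}' \subseteq \mathcal{G}$ and $\mathcal{G}'$ is strongly connected.

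For the base case, after Step~\ref{step:add-path-1} the graph $\mathcal{G}'$ equals a cycle $\mathcal{C}$ of $\mathcal{G}$, so $\mathcal{G}' \subseteq \mathcal{G}$. Any cycle is strongly connected: given any two of its vertices $u$ and $v$, following the cyclic arcs from $u$ eventually reaches $v$ (and likewise from $v$ to $u$). Thus the invariant holds initially.

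For the inductive step, suppose $\mathcal{G}'$ is strongly connected and contained in $\mathcal{G}$, and that Step~\ref{step:add-path-2} selects a trail $\mathcal{T} = (\{k_1,\dotsc,k_K\},\{(k_1,k_2),\dotsc,(k_{K-1},k_K)\})$ on $\mathcal{G}$. Since every vertex and arc of $\mathcal{T}$ lies in $\mathcal{G}$, the updated graph $\mathcal{G}'\cup\mathcal{T}$ remains a subgraph of $\mathcal{G}$. To verify strong connectivity of $\mathcal{G}'\cup\mathcal{T}$, I need to exhibit, for every ordered pair of distinct vertices $u,v$, a directed path from $u$ to $v$. I split into cases based on whether $u$ and $v$ are old (in $\mathcal{V}'$) or newly introduced inner vertices of $\mathcal{T}$. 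If both are old, the inductive hypothesis gives a path in $\mathcal{G}' \subseteq \mathcal{G}'\cup\mathcal{T}$. If $u = k_i$ is a new inner vertex, then by following the arcs $(k_i,k_{i+1}),\dotsc,(k_{K-1},k_K)$ of $\mathcal{T}$ I reach $k_K \in \mathcal{V}'$, after which I continue to any target in $\mathcal{V}'$ using the inductive hypothesis; and if the target $v = k_j$ is itself an inner vertex I first use the inductive hypothesis to travel to $k_1 \in \mathcal{V}'$ and then follow $\mathcal{T}$ forward along $(k_1,k_2),\dotsc,(k_{j-1},k_j)$. Note that in both subcases (path with $k_1 \neq k_K$, or cycle with $k_1 = k_K$), the endpoints $k_1$ and $k_K$ are guaranteed by Step~\ref{step:add-path-2} to lie in $\mathcal{V}'$, so the handoff between $\mathcal{T}$ and $\mathcal{G}'$ is always legal.

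I do not anticipate a genuine obstacle: the trail selection rule is precisely engineered so that every newly added vertex sits on a directed sub-path of $\mathcal{T}$ whose two endpoints lie in the previously strongly connected $\mathcal{G}'$, which is exactly what the strong-connectivity argument needs. The only mildly delicate point is keeping the two cases (path vs.\ cycle) together in a single argument; treating them uniformly is possible by simply noting that in either case $k_1,k_K \in \mathcal{V}'$ and the inner vertices $k_2,\dotsc,k_{K-1}$ (if any) form a directed segment of $\mathcal{T}$ with both ends in $\mathcal{V}'$, so the above routing through $k_1$ and $k_K$ applies verbatim.
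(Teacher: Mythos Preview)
Your proof is correct and follows essentially the same approach as the paper: induction on iterations, with the base case being the initial cycle and the inductive step establishing strong connectivity of $\mathcal{G}'\cup\mathcal{T}$ by routing through the endpoints $k_1,k_K\in\mathcal{V}'$. The paper's version is simply terser, compressing your case analysis into the single observation that any vertex in $\mathcal{G}'$ can reach any vertex in $\mathcal{T}$ via $k_1$ and vice versa via $k_K$.
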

\begin{proof}
Assume that in Step~\ref{step:add-path-2}, $\mathcal{G}'$ is strongly connected and a subgraph of $\mathcal{G}$.
This is true for the first iteration, since $\mathcal{G}'$ is a cycle on $\mathcal{G}$ in Step~\ref{step:add-path-1}.
In Step~\ref{step:add-path-3}, $\mathcal{G}' \cup \mathcal{T}$ is a subgraph of  $\mathcal{G}$ because $\mathcal{T}$ is on  $\mathcal{G}$, and is also strongly connected because any vertex in $\mathcal{G}'$ can reach any vertex in the appended $\mathcal{T}$ via vertex $k_1$, or vice versa via vertex $k_K$.
By induction the properties hold for every iteration.
\end{proof}

\begin{lemma}
Any non-trivial strongly connected graph $\mathcal{G}$ can be generated with the SCGC, i.e., $\mathcal{G}'= \mathcal{G}$ after the SCGC terminates in Step~\ref{step:add-path-2}.
\end{lemma}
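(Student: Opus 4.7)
The plan is a proof by contradiction: assume that SCGC terminates with $\mathcal{G}' = (\mathcal{V}', \mathcal{A}')$ a strict subgraph of $\mathcal{G}$. By Lemma~\ref{lem:strong} we already know $\mathcal{G}' \subseteq \mathcal{G}$, so the assumption splits cleanly into two cases: either $\mathcal{V}' \subsetneq \mathcal{V}$, or $\mathcal{V}' = \mathcal{V}$ but $\mathcal{A}' \subsetneq \mathcal{A}$. In each case I would exhibit a trail on $\mathcal{G}$ still eligible to be appended in Step~\ref{step:add-path-2}, contradicting the supposed termination.

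The second case is immediate: any missing arc $(u,w) \in \mathcal{A}\setminus \mathcal{A}'$ is itself a length-one path on $\mathcal{G}$ between two vertices of $\mathcal{V}'$ with no inner vertex, so it satisfies option~(i) of Step~\ref{step:add-path-2} with $K=2$ and could have been appended.

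The first case is where the real work lies and will rest on strong connectivity. I would begin by observing that there must exist a ``boundary'' arc $(k_1,k_2) \in \mathcal{A}$ with $k_1 \in \mathcal{V}'$ and $k_2 \in \mathcal{V}\setminus\mathcal{V}'$; otherwise no walk on $\mathcal{G}$ starting inside $\mathcal{V}'$ could ever leave $\mathcal{V}'$, contradicting strong connectivity together with the non-emptiness of both $\mathcal{V}'$ and $\mathcal{V}\setminus\mathcal{V}'$. Then, again by strong connectivity, there is a path on $\mathcal{G}$ from $k_2$ into $\mathcal{V}'$; picking a shortest such path $k_2, k_3, \ldots, k_K$ with $k_K \in \mathcal{V}'$, minimality forces $k_3, \ldots, k_{K-1} \in \mathcal{V}\setminus\mathcal{V}'$ and guarantees its vertices are pairwise distinct.

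Finally I would glue the two pieces into the trail $\mathcal{T} = (\{k_1,\ldots,k_K\},\,\{(k_1,k_2),\ldots,(k_{K-1},k_K)\})$ on $\mathcal{G}$ and verify it matches Step~\ref{step:add-path-2}: if $k_1 \neq k_K$ then $\mathcal{T}$ is a path with $K \geq 2$, matching option~(i); if $k_1 = k_K$ then $K \geq 3$ (since $k_1 \in \mathcal{V}'$ and $k_2 \notin \mathcal{V}'$ force $k_1 \neq k_2$), matching option~(ii). The arcs are automatically distinct because $k_2, \ldots, k_K$ are distinct and $k_1$ lies outside $\{k_2,\ldots,k_{K-1}\}$, and by construction both endpoints lie in $\mathcal{V}'$ while every inner vertex lies in $\mathcal{V}\setminus\mathcal{V}'$. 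Hence $\mathcal{T}$ is an eligible trail, contradicting termination. I expect the only subtle point to be the existence of the boundary arc; everything else reduces to bookkeeping once the right trail has been identified.
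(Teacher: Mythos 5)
Your proof is correct and follows essentially the same route as the paper: both reduce to exhibiting an eligible trail when $\mathcal{G}'\subsetneq\mathcal{G}$, and both build it from a boundary arc $(k_1,k_2)$ with $k_1\in\mathcal{V}'$ (guaranteed by strong connectivity) glued to a path from $k_2$ back into $\mathcal{V}'$, with the degenerate single-arc case handled separately. The only difference is cosmetic case organization — you split on $\mathcal{V}'=\mathcal{V}$ versus $\mathcal{V}'\subsetneq\mathcal{V}$, whereas the paper first argues $\mathcal{A}'\subsetneq\mathcal{A}$ and then splits on whether the head $k_2$ lies in $\mathcal{V}'$ — but the underlying constructions and use of strong connectivity are the same.
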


\begin{proof}
Step~\ref{step:add-path-1} is always possible, since any strongly connected graph must contain at least one cycle.

Suppose $\mathcal{G}'=\mathcal{G}$. Then it is not possible to find the trail $\mathcal{T}$ in Step~\ref{step:add-path-2}. Hence the iteration terminates with  $\mathcal{G}'= \mathcal{G}$.

Suppose $\mathcal{G}' = (\mathcal{V}',\mathcal{A}')$ is a {\em strict} subgraph of $\mathcal{G}=(\mathcal{V},\mathcal{A})$.
Denote $\mathcal{V}'^{\text{c}}=\mathcal{V}\setminus \mathcal{V'}, \mathcal{A}'^{\text{c}}=\mathcal{A}\setminus \mathcal{A'}$.
To complete the proof, we show that the trail in Step~\ref{step:add-path-2} can always be found, so that the iteration can continue until the algorithm terminates.

Without loss of generality, let $\mathcal{V}' \subseteq \mathcal{V}$ and $\mathcal{A}' \subset \mathcal{A}$,  i.e, $\mathcal{A}'^{\text{c}}$ is non-empty.
Otherwise,  $\mathcal{V}' \subset \mathcal{V}$ and $\mathcal{A}' = \mathcal{A}$. Since $\mathcal{G}' = (\mathcal{V}',\mathcal{A}'=\mathcal{A})$ is strongly connected (follows from Lemma~\ref{lem:strong}), $\mathcal{G}$ cannot be strongly connected (contradiction).

Since $\mathcal{G}$ is strongly connected, and $\mathcal{A}'^{\text{c}}$ is non-empty, there must exist an  {\em arc} $(k_1,k_2) \in \mathcal{A}'^{\text{c}} $
such that $k_1\in \mathcal{V}'$ and either  $k_2\in \mathcal{V}'$ or  $k_2\in \mathcal{V}'^{\text{c}}$.
In either case the trail $\mathcal{T}$ on $\mathcal{G}$ can be found in Step~\ref{step:add-path-2}: \\
(i) Suppose $k_2\in \mathcal{V}'$. We have $\mathcal{T} = ( \{k_1,k_2\}, \{ (k_1,k_2)\})$. \\
(ii) Suppose $k_2\in \mathcal{V}'^{\text{c}}$. Since $\mathcal{G}$ is strongly connected, there must exist a path, say $\mathcal{P}'$, from  $k_2$ back to any vertex in $\mathcal{V}'$. Denote the first vertex in $\mathcal{P}'$  that reaches $\mathcal{G}'$ as $k_K\in  \mathcal{V}'$, and the subpath from $k_2$ to $k_K$ as $\mathcal{P}''$. We have $\mathcal{T} =  ( \{k_1,k_2\}, \{ (k_1,k_2)\}) \cup \mathcal{P}''$. Clearly, $k_1, k_K \in \mathcal{V}'$ and the inner vertices are in $\mathcal{V}'^{\text{c}}$, meaning that conditions  in Step~\ref{step:add-path-2} are satisfied with  $\mathcal{T}$ being a path if $k_1\neq k_K$, and being a cycle otherwise.
\end{proof}

\subsubsection{The Reverse SCGC Pruning Algorithm}

Now, for any strongly connected graph, we propose Algorithm~\ref{algo:2} which prunes the graph using the information from the SCGC.
\begin{algorithm}[h]
\ForEach{ trail $\mathcal{T}$ added in Step~\ref{step:add-path-2} of the SCGC}{
remove the first arc of $\mathcal{T}$\;
}

remove any arc from the cycle $\mathcal{C}$ chosen in Step~\ref{step:add-path-1} of the SCGC\;

\caption{The Reverse SCGC Pruning Algorithm}
\label{algo:2}
\end{algorithm}

We have the following results after executing Algorithm~\ref{algo:2}:
\begin{lemma} \label{prop:cycle-tree-free}
Given a strongly connected graph $\mathcal{G}$, after Algorithm~\ref{algo:2} the resulting graph $\mathcal{G}''$ is acyclic with $\maxod=1$.
\end{lemma}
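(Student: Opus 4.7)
The plan is to induct on the iterations of the SCGC, maintaining the stronger invariant that after each pruning step the graph built so far is a single directed in-tree (connected, acyclic, every vertex possessing a unique directed path to one root vertex of outdegree $0$). Acyclicity and $\maxod \leq 1$ then fall out as immediate corollaries, and as a bonus the arc count will come out to $V(\mathcal{G}) - 1$, which is what the subsequent application of Theorem~\ref{theorem:1} needs.

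For the outdegree part, the key bookkeeping is inside a single trail $\mathcal{T} = (\{k_1, \dotsc, k_K\}, \{(k_1,k_2), \dotsc, (k_{K-1},k_K)\})$. Each of $k_1, \dotsc, k_{K-1}$ contributes exactly one outgoing arc, $(k_i, k_{i+1})$, inside $\mathcal{T}$, and $k_K$ contributes none; this remains true in the cycle case $k_1 = k_K$, where the single outgoing arc at $k_1 = k_K$ from $\mathcal{T}$ is precisely the first arc $(k_1, k_2)$. Removing that first arc therefore erases $k_1$'s contribution while leaving each inner vertex $k_i$, $2 \leq i \leq K-1$, with exactly one outgoing arc inside $\mathcal{T}$. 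Because the SCGC requires the inner vertices to be fresh (not in $\mathcal{V}'$), they cannot be touched by later iterations: every future trail contributes outgoing arcs only at its own first and inner vertices, and its first vertex's contribution is pruned away. Together with the initial cycle contributing outdegree $1$ at each of its vertices before one of its arcs is removed, this yields $\maxod(\mathcal{G}'') \leq 1$.

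For acyclicity, I would run the in-tree induction. The base case is immediate since deleting one arc from $\mathcal{C}$ yields a directed path, which is an in-tree rooted at the tail of the removed arc. For the step, the appended and pruned trail reduces to the directed path $k_2 \to k_3 \to \cdots \to k_K$, where $k_2, \dotsc, k_{K-1}$ are new vertices and $k_K$ already lies in the current in-tree. Since $k_K$ already has a directed path to the root, so do all the newly attached vertices; and by the outdegree discussion above, no previously existing vertex acquires a new outgoing arc. The updated graph is therefore still a single in-tree rooted at the same vertex, hence acyclic.

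The main subtlety I anticipate is the cycle case of Step~\ref{step:add-path-2}, where the identification $k_1 = k_K$ collapses the endpoint roles and one must check that removing the first arc strips the single outgoing contribution at $k_1$ without accidentally leaving an outgoing arc at $k_K$. The SCGC requirements $K \geq 3$ and the freshness and distinctness of $k_2, \dotsc, k_{K-1}$ are precisely what makes this case behave identically to the path case, so once that is carefully unpacked the rest of the argument is essentially the same in both branches.
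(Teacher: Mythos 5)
Your proof is correct, and the in-tree invariant is a genuinely different and arguably cleaner decomposition than the paper's. The paper argues acyclicity and $\maxod=1$ separately: for acyclicity it reasons backward from the last appended trail, claiming that removing a trail's first arc destroys every cycle that trail created (which relies on the slightly delicate observation that any such cycle must thread through the first arc because inner vertices have unique incoming arcs); for $\maxod=1$ it introduces a joint ``append-the-already-pruned-trail'' algorithm and checks that the only shared vertex $k_K$ receives an incoming arc. You instead maintain the single stronger invariant that the pruned graph is a directed in-tree, from which acyclicity and $\maxod \leq 1$ both fall out immediately, and you get $A(\mathcal{G}'')=V(\mathcal{G}'')-1$ (the paper's Lemma~\ref{prop:number-of-arcs}) for free as the tree-edge count. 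The underlying bookkeeping is the same in both proofs --- inner vertices are fresh, the first arc cancels $k_1$'s outgoing contribution, and $k_K$ only gains an incoming arc --- but packaging it as an in-tree invariant unifies the two lemmas and makes the inductive step more transparent. One phrasing nit: you say the inner vertices of a trail ``cannot be touched by later iterations,'' but they can later serve as the $k_1$ or $k_K$ of a future trail; the correct claim, which you do make in the next clause, is that any outgoing arc a future trail would add at such a vertex is precisely that trail's first arc and is therefore pruned, so the net outdegree is unaffected.
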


\begin{proof}
We first show that $\mathcal{G}''$ is acyclic, i.e., it does not contain any cycle.
Note that besides the first cycle $\mathcal{C}$ in Step~\ref{step:add-path-1} of SCGC, all other cycles in $\mathcal{G}$ are created in Step~\ref{step:add-path-3} of SCGC. Consider the last appended trail. If we remove the first arc of this trail, we will break all the cycles created by appending this trail. Doing this (i.e., removing the first arc of the appended trail) for the next last-added trail and working backward, we will remove all cycles in $\mathcal{G}$ except $\mathcal{C}$. Now, removing any arc in $\mathcal{C}$ will break the cycle $\mathcal{C}$. So, the resultant graph $\mathcal{G}''$ has no cycle.

Next we show that $\mathcal{G}''$ has $\maxod = 1$. The graph $\mathcal{G}''$ can also be obtained by performing the SCGC and Algorithm~\ref{algo:2} jointly as follows: (i) execute Step~\ref{step:add-path-1} in the SCGC to get $\mathcal{G}'$; (ii) execute Step 2 in the SCGC to get a trail $\mathcal{T}$; (iii) execute Step~\ref{step:add-path-3} in the SCGC but instead of appending the trail $\mathcal{T}$, we append $\mathcal{T}' = \mathcal{T} \setminus (\{k_1\},\{(k_1,k_2)\})$; (iv) remove an arc from the cycle chosen in (i). It is clear that this joint algorithm produces $\mathcal{G}''$. Now, after step (i), $\mathcal{G}'$ is a cycle and hence $\maxod=1$. Next, we consider each iteration (ii)--(iii). We show by induction that if $\mathcal{G}'$ has $\maxod=1$, after appending $\mathcal{T}'$, $(\mathcal{G}' \cup \mathcal{T}')$ has $\maxod=1$.
As $\mathcal{T}$ can only assume a path or a cycle, $\mathcal{T}$ has $\maxod =1$, and so does $\mathcal{T}'$.
Since  $\mathcal{G}' \cap \mathcal{T}' = (\{k_K\}, \emptyset)$, only the vertex $k_K$ can change its outdegree when we append $\mathcal{T}'$ to $\mathcal{G}'$. But the last arc $(k_{K-1},k_K)$ on trail $\mathcal{T}'$ meets the vertex $k_K$ as an incoming arc, so its $\od$ is in fact not changed. Hence, $(\mathcal{G}' \cup \mathcal{T}')$ has $\maxod=1$.
\end{proof}

\begin{lemma} \label{prop:number-of-arcs}
Given a strongly connected graph $\mathcal{G}$, after Algorithm~\ref{algo:2} the number of arcs in the resulting graph $\mathcal{G}''$ equals the number of vertices in $\mathcal{G}''$ less one, i.e., $A(\mathcal{G}'') = V(\mathcal{G}'')-1$.
\end{lemma}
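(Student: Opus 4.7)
The plan is to prove Lemma via a direct counting argument that tracks $V$ and $A$ through the SCGC and then through Algorithm~\ref{algo:2}. Since Algorithm~\ref{algo:2} only removes arcs, we have $V(\mathcal{G}'') = V(\mathcal{G})$, so it suffices to show that $A(\mathcal{G}'') = V(\mathcal{G}) - 1$.

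First, I would analyze how $V(\mathcal{G}')$ and $A(\mathcal{G}')$ evolve during the SCGC. In Step~\ref{step:add-path-1}, the initial cycle $\mathcal{C}$ of length $L$ contributes $L$ vertices and $L$ arcs, so $A(\mathcal{G}') - V(\mathcal{G}') = 0$. In each iteration of Step~\ref{step:add-path-3}, a trail $\mathcal{T}$ of length $K-1$ is appended. By the conditions in Step~\ref{step:add-path-2}, the tail $k_1$ and head $k_K$ lie in $\mathcal{V}'$ while the inner vertices $k_2,\dotsc,k_{K-1}$ all lie in $\mathcal{V}\setminus\mathcal{V}'$ and are distinct; this is true both when $\mathcal{T}$ is a path ($k_1\neq k_K$) and when $\mathcal{T}$ is a cycle ($k_1 = k_K$). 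Hence appending $\mathcal{T}$ introduces exactly $K-2$ new vertices and $K-1$ new arcs (the arcs of $\mathcal{T}$ are necessarily new because any arc of $\mathcal{T}$ is incident to an inner vertex outside $\mathcal{V}'$, except possibly in the degenerate case $K=2$, in which case the single arc is chosen outside $\mathcal{A}'$ by the convention that the iteration genuinely extends $\mathcal{G}'$). Consequently, each iteration increases $A(\mathcal{G}')-V(\mathcal{G}')$ by exactly one.

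Second, suppose that when the SCGC terminates (with $\mathcal{G}' = \mathcal{G}$) a total of $N$ trails have been appended. The recursion above yields
\begin{equation}
A(\mathcal{G}) - V(\mathcal{G}) = N.
\end{equation}
Algorithm~\ref{algo:2} then removes one arc per appended trail and one additional arc from the initial cycle $\mathcal{C}$, for a total of $N+1$ arcs. These removed arcs are distinct: the ``first arcs'' of different trails are distinct arcs of $\mathcal{G}$ since each such arc is created exactly once during the SCGC, and the arc removed from $\mathcal{C}$ belongs to the initial cycle, which is arc-disjoint from all subsequently appended trails (again because the first arcs of those trails either lie between an inner vertex and $\mathcal{V}'$ or were newly introduced by the algorithm).

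Putting these together,
\begin{equation}
A(\mathcal{G}'') = A(\mathcal{G}) - (N+1) = V(\mathcal{G}) + N - N - 1 = V(\mathcal{G}'') - 1,
\end{equation}
which is the desired equality. I expect no serious obstacle here: this is essentially a bookkeeping argument. The only subtle point is to justify carefully that each appended trail really contributes $K-1$ brand-new arcs and $K-2$ brand-new vertices, so that the arcs slated for removal by Algorithm~\ref{algo:2} are all genuinely distinct and genuinely present in $\mathcal{G}$. That verification follows directly from the admissibility conditions on $\mathcal{T}$ in Step~\ref{step:add-path-2}.
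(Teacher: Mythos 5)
Your proof is correct and follows essentially the same bookkeeping argument as the paper: track the running difference $A(\mathcal{G}')-V(\mathcal{G}')$ through the SCGC and then subtract the arcs Algorithm~\ref{algo:2} removes. You additionally spell out why each appended trail's arcs are genuinely new and why the removed arcs are pairwise distinct --- points the paper's proof asserts without elaboration --- but the underlying approach is the same.
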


\begin{proof}
From SCGC, after Step~\ref{step:add-path-1}, the number of vertices equals the number of arcs. For each trail with $K$ arcs appended to $\mathcal{G}'$ in Step~\ref{step:add-path-3} of SCGC, we add $K$ new arcs and $K-1$ new vertices to $\mathcal{G}'$. However, because the first arc will be removed in Algorithm~\ref{algo:2}, effectively, the number of vertices added to $\mathcal{G}'$ equals the number of arcs added to $\mathcal{G}'$. In the last step in Algorithm~\ref{algo:2}, an arc is removed from the cycle added in  Step~\ref{step:add-path-1} of SCGC. Hence, the number of arcs in the resulting graph $\mathcal{G}''$ equals the number of vertices in $\mathcal{G}''$ less one.
\end{proof}

\subsubsection{Getting the Lower Bound}

Now, for any single-uniprior problem represented by a strongly connected graph $\mathcal{G} = (\mathcal{V}, \mathcal{A})$, we execute Algorithm~\ref{algo:2} to obtain $\mathcal{G}'' = (\mathcal{V}, \mathcal{A}'')$. We have $\ell^*(\mathcal{G}) \stackrel{(a)}\geq \ell^*(\mathcal{G}'') \stackrel{(b)}= A(\mathcal{G}'')  \stackrel{(c)}=V(\mathcal{G}'')-1 \label{d} \stackrel{(d)}= V(\mathcal{G})-1$,
where (a) is because $\mathcal{A}'' \subseteq \mathcal{A}$, (b) follows from Theorem~\ref{theorem:1} as $\mathcal{G}''$ is acyclic with $\maxod=1$, (c) follows from Lemma~\ref {prop:number-of-arcs}, and (d) is because we only prune arcs but not vertices in  Algorithm~\ref{algo:2}.
 Combining this lower bound with the upper bound in Sec.~\ref{sec:achievability}, we have Theorem~\ref{theorem:strong}. \hfill $\blacksquare$

\begin{remark}
Note that the SCGC and Algorithm~\ref{algo:2} need not be executed for actual coding. They are only used to show that we are able to prune some arcs off a strongly connected graph until the number of arcs equals the number of nodes less one, {\em and} the resultant graph is acyclic with $\maxod=1$.
\end{remark}

\section{General Graphs}\label{section:general}

We now generalize the results in the previous sections to any single-uniprior problem represented by a graph $\mathcal{G}$.

\subsection{The Pruning Algorithm}

To build on the earlier results for specific classes of graphs, we introduce Algorithm~\ref{algo:pruning} that first prunes $\mathcal{G}$ to get $\mathcal{G}'$. It can be shown that Algorithm~\ref{algo:pruning} runs in polynomial time with respect to the number of vertices.

\begin{algorithm}[h]

Initialization: $\mathcal{G}'=(\mathcal{V}',\mathcal{A}') \leftarrow \mathcal{G}=(\mathcal{V},\mathcal{A}) $ \;
1) Iteration: \\
\While{there exists a vertex $i \in \mathcal{V}'$ with\\ $\quad\quad$ (i) more than one outgoing arc, and\\ $\quad\quad$ (ii) an outgoing  arc that does not belong to any\\ $\quad\quad$ cycle [denote any such arc by $(i,j)$]\\}{
remove, from $\mathcal{G}'$, all outgoing arcs of vertex $i$ except for the arc $(i,j)$\;
}
2) label each non-trivial strongly connected component in $\mathcal{G}'$  as $\mathcal{G}_{\text{sub}, i}'$, $i \in \{1,2,\dotsc, N_\text{sub}\}$\;
\caption{The Pruning Algorithm}
\label{algo:pruning}
\end{algorithm}

After Algorithm~\ref{algo:pruning} terminates, we get $\mathcal{G}'=\mathcal{G}_\text{sub}' \cup \left(\mathcal{G}' \setminus \mathcal{G}_\text{sub}'\right)$ where
$\mathcal{G}_\text{sub}' \triangleq  \bigcup_{i=1}^{N_\text{sub}} \mathcal{G}_{\text{sub},i}'$ is a graph consisting of {\em non-trivial}\footnote{A strongly connected component is non-trivial if it has two or more vertices.}  strongly connected components$\{\mathcal{G}_{\text{sub},i}'\}$, and $\mathcal{G}' \setminus \mathcal{G}_\text{sub}' \triangleq (\mathcal{V}' \setminus \mathcal{V}_\text{sub}', \mathcal{A}' \setminus \mathcal{A}_\text{sub}'  )$. Note that $\mathcal{G}' \setminus \mathcal{G}_\text{sub}'$ might not be a graph as there might exist an arc $(i,j) \in \mathcal{G}' \setminus \mathcal{G}_\text{sub}'$ where $i \in \mathcal{G}' \setminus \mathcal{G}_\text{sub}'$ and $j \in \mathcal{G}_\text{sub}'$.
We make the following observations.\\
\indent\emph{Observation 1:} The sets $\{\mathcal{G}_{\text{sub},1}', \mathcal{G}_{\text{sub},2}', \dotsc,\mathcal{G}_{\text{sub},N_\text{sub}}',\mathcal{G}' \setminus \mathcal{G}_\text{sub}'\}$  are vertex and arc disjoint. \\
\indent\emph{Observation 2:} After Step 1 in Algorithm~\ref{algo:pruning}, all vertices with $\od > 1$ in $\mathcal{G}'$ have all outgoing arcs belonging to some cycles. So all these vertices and arcs will eventually be in $\mathcal{G}_\text{sub}'$.

We now state the following main result of this paper: 
\begin{theorem}\label{theorem:general}
For any single-uniprior problem, which can be represented by a graph $\mathcal{G}$, after executing Algorithm~\ref{algo:pruning} we have
\gap
\begin{equation}
\ell^*(\mathcal{G}) = \sum_{i=1}^{N_\text{sub}} (V(\mathcal{G}_{\text{sub},i}')-1) + A(\mathcal{G}' \setminus \mathcal{G}_\text{sub}'). \label{eq:capacity}
\end{equation}
\end{theorem}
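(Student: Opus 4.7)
My plan is the lower/upper bound template laid out in Section~\ref{sec:graph-compare}: exhibit a subgraph $\mathcal{G}''\subseteq\mathcal{G}$ on the same vertex set whose optimal codelength equals the claimed value (giving the lower bound via Lemma~\ref{lem:basic}), and then construct an index code of the same length directly for the \emph{original} $\mathcal{G}$.

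First I would record several structural facts about the output $\mathcal{G}'$ of Algorithm~\ref{algo:pruning}. By Observation~2, every non-SCC vertex has outdegree at most $1$ in $\mathcal{G}'$; any cycle of $\mathcal{G}'$ would by definition lie inside a non-trivial SCC, so $\mathcal{G}'\setminus\mathcal{G}_{\text{sub}}'$ is acyclic with $\maxod=1$. Every vertex $k$ in an SCC of $\mathcal{G}'$ has all its $\mathcal{G}'$-outgoing arcs inside the same SCC: outdegree $1$ forces the single arc onto a cycle, while outdegree $>1$ forces all outgoing arcs onto cycles by Observation~2. Moreover, such a $k$ was \emph{never} acted on by the while-loop, since otherwise it would end with outdegree $1$ on an arc that is not in any cycle (the loop's retention criterion), placing $k$ outside every SCC. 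Consequently, the $\mathcal{G}$-outgoing arcs of $k$ coincide with its $\mathcal{G}'$-outgoing arcs and all land inside the same SCC. Finally, every arc of $\mathcal{A}'\setminus\mathcal{A}_{\text{sub}}'$ has its tail in a non-SCC vertex, and since those tails have outdegree at most $1$, distinct arcs have distinct tails.

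For the lower bound, apply Algorithm~\ref{algo:2} to each $\mathcal{G}_{\text{sub},i}'$, yielding an acyclic $\maxod=1$ subgraph $\mathcal{G}_{\text{sub},i}''$ with $A(\mathcal{G}_{\text{sub},i}'')=V(\mathcal{G}_{\text{sub},i}')-1$ by Lemmas~\ref{prop:cycle-tree-free} and~\ref{prop:number-of-arcs}. Let $\mathcal{G}''$ be the union of $\bigcup_i\mathcal{G}_{\text{sub},i}''$ with $\mathcal{G}'\setminus\mathcal{G}_{\text{sub}}'$, taken on the vertex set of $\mathcal{G}$. Every vertex still has outdegree at most $1$, and the ``cross'' arcs of $\mathcal{G}'\setminus\mathcal{G}_{\text{sub}}'$ can only enter an SCC (never leave one), so no new cycle is formed and $\mathcal{G}''$ is acyclic. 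Theorem~\ref{theorem:1} together with Observation~1 (disjointness) then gives
\[
\ell^*(\mathcal{G}'')=A(\mathcal{G}'')=\sum_{i=1}^{N_\text{sub}}\bigl(V(\mathcal{G}_{\text{sub},i}')-1\bigr)+A(\mathcal{G}'\setminus\mathcal{G}_{\text{sub}}'),
\]
and Lemma~\ref{lem:basic} delivers $\ell^*(\mathcal{G})\ge\ell^*(\mathcal{G}'')$. For the upper bound, the sender transmits, for every $i$, the XOR-chain of Section~\ref{sec:achievability} on the vertices of $\mathcal{G}_{\text{sub},i}'$ (so every vertex of that SCC decodes every other $x_k$ in the same SCC, at a cost of $V(\mathcal{G}_{\text{sub},i}')-1$ bits), and additionally the uncoded bit $x_i$ for each tail $i$ of an arc in $\mathcal{A}'\setminus\mathcal{A}_{\text{sub}}'$ (at a cost of $A(\mathcal{G}'\setminus\mathcal{G}_{\text{sub}}')$ bits, with no duplication by the last structural fact). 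The total matches the claim. To verify every request $(k,j)\in\mathcal{A}$ is served, if $k$ belongs to an SCC of $\mathcal{G}'$ then the structural facts force $j$ to lie in the same SCC and $j$ recovers $x_k$ from the chain; otherwise $k$ is outside every SCC, $x_k$ is broadcast uncoded, and $j$ recovers it directly.

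The main obstacle, in my view, is the structural claim that an SCC-vertex of $\mathcal{G}'$ is never touched by the while-loop of Algorithm~\ref{algo:pruning} and therefore retains all of its original outgoing neighbours; this is what guarantees that no request in $\mathcal{G}\setminus\mathcal{G}'$ is ``orphaned'' by the upper bound code. The argument hinges on the fact that, once the loop retains an outgoing arc of $k$ that is not on any cycle, that property is preserved by all subsequent arc removals, so $k$ ends with outdegree $1$ off every cycle and cannot be in any SCC. Once this is in hand, the remaining steps are bookkeeping that combine the outputs of Theorems~\ref{theorem:1} and~\ref{theorem:strong}, the arc-counting lemmas of Section~\ref{section:strong}, and the disjointness in Observation~1.
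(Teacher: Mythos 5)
Your proof is correct and follows essentially the same two-step template as the paper: prune with Algorithm~\ref{algo:pruning} and then Algorithm~\ref{algo:2} to obtain an acyclic $\maxod=1$ subgraph $\mathcal{G}''$ giving the lower bound via Lemma~\ref{lem:basic} and Theorem~\ref{theorem:1}, then serve the SCCs with XOR chains and the remaining arcs uncoded for the matching upper bound. Where the paper tersely observes that each removed arc $(i,k)$ has a surviving sibling $(i,j)\in\mathcal{G}'\setminus\mathcal{G}_\text{sub}'$ whose tail message is sent uncoded, you instead establish the equivalent structural fact that SCC vertices of $\mathcal{G}'$ are never touched by the while-loop and so retain all their original out-neighbours inside their SCC; this is a legitimate and somewhat more explicit way to reach the same conclusion, not a different approach.
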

Here, $A(\mathcal{G}' \setminus \mathcal{G}_\text{sub}')$ is the number of arcs in $\mathcal{A}' \setminus \mathcal{A}_\text{sub}'$.

\begin{remark}
If $\mathcal{G}$ is acyclic with $\maxod=1$, then $N_\text{sub}=0$ and  $\mathcal{G}=\mathcal{G}' \setminus \mathcal{G}_\text{sub}'$. Thus, we recover Theorem~\ref{theorem:1}.
If $\mathcal{G}$ is strongly connected, then $N_\text{sub}=1$, $\mathcal{G}_{\text{sub},1}'=\mathcal{G}_\text{sub}' = \mathcal{G}'= \mathcal{G}$, and $\mathcal{G}' \setminus \mathcal{G}_\text{sub}'=(\emptyset,\emptyset)$. Thus,  we recover Theorem~\ref{theorem:strong}.
\end{remark}

We prove Theorem~\ref{theorem:general} in the following two subsections.

\subsection{Lower Bound}


Now, for any graph $\mathcal{G}$, we first execute Algorithm~\ref{algo:pruning} to get $\mathcal{G}' =  \bigcup_{i=1}^{N_\text{sub}} \mathcal{G}_{\text{sub},i}' \cup \left(\mathcal{G}' \setminus \mathcal{G}_\text{sub}' \right)$. For each strongly connected components $\mathcal{G}_{\text{sub},i}'$, we execute Algorithm~\ref{algo:2} to get $\mathcal{G}_{\text{sub},i}''$. Let the final graph be $\mathcal{G}'' =  \bigcup_{i=1}^{N_\text{sub}} \mathcal{G}_{\text{sub},i}'' \cup \left(\mathcal{G}' \setminus \mathcal{G}_\text{sub}' \right)$.
Then
\gap
\begin{subequations}
\begin{align}
A(\mathcal{G}'') &= \sum_{i=1}^{N_\text{sub}} (A(\mathcal{G}_{\text{sub},i}'')) + A(\mathcal{G}' \setminus \mathcal{G}_\text{sub}') \label{aa} \\
&= \sum_{i=1}^{N_\text{sub}} (V(\mathcal{G}_{\text{sub},i}'')-1) + A(\mathcal{G}' \setminus \mathcal{G}_\text{sub}') \label{bb} \\
&= \sum_{i=1}^{N_\text{sub}} (V(\mathcal{G}_{\text{sub},i}')-1) + A(\mathcal{G}' \setminus \mathcal{G}_\text{sub}'), \label{cc}
\end{align}
\end{subequations}
where \eqref{aa} follows from Observation 1, \eqref{bb} follows from Lemma~\ref{prop:number-of-arcs}, and \eqref{cc} follows because we prune arcs but not vertices in Algorithm~\ref{algo:2},  so $V(\mathcal{G}_{\text{sub},i}'') = V(\mathcal{G}_{\text{sub},i}')$, for all $i\in\{1,\dotsc, N_{\text{sub}}\}$.

Following from Observation 2, each vertex in $\mathcal{G}' \setminus \mathcal{G}'_\text{sub}$  has $\od \leq 1$.  After executing Algorithm~\ref{algo:2}, from Lemma~\ref{prop:cycle-tree-free}, all $\mathcal{G}_{\text{sub},i}''$ has $\maxod=1$. Hence, $\mathcal{G}''$ has $\maxod=1$.

Any cycle $\mathcal{C} \subseteq \mathcal{G}'$ satisfies  $\mathcal{C} \subseteq\mathcal{G}_{\text{sub},i}'$, for some $i\in\{1,\dotsc, N_{\text{sub}}\}$. Otherwise, the subgraph $\mathcal{G}_{\text{sub},i}'$  that includes only a part of $\mathcal{C}$ is not the {\em maximal} strongly connected subgraph.
After executing Algorithm~\ref{algo:2}, from Lemma~\ref{prop:cycle-tree-free}, all $\mathcal{G}_{\text{sub},i}''$ have no cycle. Thus, $\mathcal{G}''$ has no cycle.

Now, since $\mathcal{G}''$ is acyclic with $\maxod=1$ with $\mathcal{V}'' = \mathcal{V}$ and $\mathcal{A}'' \subseteq \mathcal{A}$, from Lemma~\ref{lem:basic} and Theorem~\ref{theorem:1}, we have  $\ell^*(\mathcal{G}) \geq \ell^*(\mathcal{G}'') = A(\mathcal{G}'') = \sum_{i=1}^{N_\text{sub}} (V(\mathcal{G}_{\text{sub},i}')-1) + A(\mathcal{G}' \setminus \mathcal{G}_\text{sub}')$.  \hfill $\blacksquare$

\subsection{Achievability (Upper Bound)}

We will now show that the number of bits in \eqref{eq:capacity} is indeed achievable. We propose a coding scheme for $\mathcal{G}'$, and then show that the scheme also satisfies $\mathcal{G}$.
\begin{enumerate}
\item For each strongly connected component $\mathcal{G}_{\text{sub},i}'$, we use the coding scheme in Sec.~\ref{section:strong}. Let all the vertices in the subgraph be $\{k_1,k_2,\dotsc, k_{V(\mathcal{G}_{\text{sub},i}')}\}$. The sender transmits the following $(V(\mathcal{G}_{\text{sub},i}')-1)$ bits: $(x_{k_1,k_2}, x_{k_2,k_3}, \dotsc, x_{k_{V(\mathcal{G}_{\text{sub},i}')-1},k_{V(\mathcal{G}_{\text{sub},i})}} )$. Doing this for all $i \in \{1,2,\dotsc N_\text{sub}\}$, the number of bits the sender needs to transmit is $\sum_{i=1}^{N_\text{sub}} (V(\mathcal{G}_{\text{sub},i}')-1)$.
\item For $\mathcal{G}' \setminus \mathcal{G}_\text{sub}'$, we use the coding scheme in Sec.~\ref{section:free}, i.e., the sender transmits $( x_i : \forall (i,j) \in \mathcal{G}' \setminus \mathcal{G}_\text{sub}')$.  The number of bits here is $A(\mathcal{G}' \setminus \mathcal{G}_\text{sub}')$.
\end{enumerate}

From these received bits, all receivers can obtain $\{ x_i : \forall (i,j) \in \mathcal{G}' \setminus \mathcal{G}_\text{sub}'\}$, and each receiver in each $\mathcal{G}_{\text{sub},i}'$ is able to decode $\{ x_j: \forall j \in \mathcal{G}_{\text{sub},i}' \}$. So the transmission requirements of $\mathcal{G}'$ are satisfied.

Now, recall that $\mathcal{G}'$ is obtained after executing Algorithm~\ref{algo:pruning} on $\mathcal{G}$. The only difference in the two graphs is that on the former, some arcs have been removed. However, for any arc removed, say $(i,k)$, the corresponding message is $x_i$, and there exists another arc, $(i,j) \in \mathcal{G}' \setminus \mathcal{G}_\text{sub}'$, $j \neq k$, not removed. In the above coding scheme, $x_i$ is transmitted without coding, and is received by all receivers.  This means the transmission requirements of $\mathcal{G}$ (with additional arcs) are also satisfied.

The total number of bits to be transmitted is $\sum_{i=1}^{N_\text{sub}} (V(\mathcal{G}_{\text{sub},i}')-1) + A(\mathcal{G}' \setminus \mathcal{G}_\text{sub}') \triangleq M$.  So, we have $\ell^*(\mathcal{G}) \leq M$. \hfill $\blacksquare$


\appendix

\section{Proof of Theorem~\ref{theorem:1}} \label{appendix:acyclic}

Here, we prove Theorem~\ref{theorem:1}. We need the following lemma.

\begin{lemma}\label{prop:nodes-indexing}
Every acyclic graph $\mathcal{G}$ has an acyclic ordering of its vertices, i.e.,  $i > j$ for every arc $(z_i,z_j)$ in $\mathcal{G}$.
\end{lemma}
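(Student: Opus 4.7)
The plan is to prove Lemma~\ref{prop:nodes-indexing} by induction on the number of vertices $n = V(\mathcal{G})$, peeling off one sink vertex (a vertex with outdegree $0$) at each step. The base case $n = 1$ is trivial, since on a one-vertex graph there are no arcs to constrain. For the inductive step, the key auxiliary fact is that every finite acyclic digraph contains at least one sink; granted this, the induction goes through mechanically.

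To establish the auxiliary fact, I would argue by contradiction. Suppose every vertex of $\mathcal{G}$ has at least one outgoing arc. Starting from an arbitrary vertex $v_1$ and following some outgoing arc at each step, I obtain an infinite walk $v_1, v_2, v_3, \dotsc$ in $\mathcal{G}$. Because $\mathcal{V}$ is finite, some vertex must eventually repeat, say $v_a = v_b$ with $a < b$, and then the segment $v_a, v_{a+1}, \dotsc, v_b$ closes into a cycle, contradicting acyclicity. Hence $\mathcal{G}$ has a sink, which I shall label $z_1$.

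With this in hand, the inductive step proceeds as follows. Let $\mathcal{G} - z_1$ denote the subgraph obtained by deleting $z_1$ together with all arcs incident to it. Deletion of vertices and arcs cannot create a cycle, so $\mathcal{G} - z_1$ is acyclic on $n-1$ vertices; the induction hypothesis then yields an acyclic ordering $z_2, z_3, \dotsc, z_n$ of its vertices satisfying $i > j$ for every arc $(z_i, z_j)$ in $\mathcal{G} - z_1$. The remaining arcs of $\mathcal{G}$ are precisely those incident to $z_1$, and because $z_1$ is a sink each such arc has the form $(z_i, z_1)$ with $i \geq 2 > 1$. Hence the condition $i > j$ holds uniformly across the whole ordering $z_1, z_2, \dotsc, z_n$.

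I do not anticipate any substantive obstacle: the argument is the classical topological-sort construction, and the only thing to double-check is that the paper's convention (arcs go from higher-indexed tail to lower-indexed head) is consistent with labelling sinks first. Since a sink has no outgoing arcs, assigning it the smallest index $1$ trivially preserves the required inequality, so the convention aligns with the peel-off-a-sink procedure.
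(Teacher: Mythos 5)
Your proof is correct and is the classical topological-sort (sink-peeling) argument. The paper itself does not prove this lemma but simply defers to \cite[Proposition 1.4.3]{Bang-JensenGutin}, which establishes the same fact with the reversed inequality $i < j$ (equivalently, peeling off sources rather than sinks); your induction is that same standard argument adapted to the paper's convention $i > j$, so there is no substantive difference in approach. (One cosmetic point: when extracting a cycle from the repeating walk, take $a,b$ with $b-a$ minimal so that $v_a,\dotsc,v_{b-1}$ are distinct, ensuring the closed walk is a cycle in the paper's strict sense.)
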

\begin{proof}[Proof of Lemma~\ref{prop:nodes-indexing}]
See proof in \cite[Proposition 1.4.3]{Bang-JensenGutin}  (it is shown instead that  $i < j$ for every arc $(z_i,z_j)$ in $\mathcal{G}$).
\end{proof}

To prove Theorem~\ref{theorem:1}, we start with a lower bound on $\ell^*(\mathcal{G})$.
Denote the set of vertices with outdegree $\od=0$  as $\mathcal{Z} \subset \mathcal{V}$. Using the (re-)indexing method in the proof of Lemma~\ref{prop:nodes-indexing}, we can express $\mathcal{Z} = \{z_1, z_2, \dotsc, z_{|\mathcal{Z}|} \}$. The rest of the vertices is denoted as $\mathcal{Z}^\text{c} = \{z_{|\mathcal{Z}|+1}, z_{|\mathcal{Z}|+2}, \dotsc, z_n\}$. So, the messages to be decoded by at least one receiver are $\{x_{z_{|\mathcal{Z}|+1}}, x_{z_{|\mathcal{Z}|+2}}, \dotsc, x_{z_n}\}$. Since $\maxod=1$, each vertex $z_i \in \mathcal{Z}^\text{c}$ has exactly one outgoing arc, denoted by $(z_i, z_{r(i)})$, i.e., the message $x_{z_i}$ is requested by only one receiver, denoted by receiver $z_{r(i)}$. From Lemma~\ref{prop:nodes-indexing}, we know that $i > r(i)$. Also, the total number of arcs in the graph is $A(\mathcal{G}) =|\mathcal{Z}^\text{c}|=n - |\mathcal{Z}|$.
Let $X_i \in \{0,1\}$ be the random variable for the message bit $x_i$, and denote $X_{\mathcal{S}}\triangleq\{X_i : i\in \mathcal{S}\}$.
Since each message bit is uniformly distributed over $\{0,1\}$, we have that $H(X_i)=1$. Then it follows that
\gap
\begin{subequations}
\begin{align}
&A(\mathcal{G}) = n - |\mathcal{Z}| = \sum_{i \in \mathcal{Z}^\text{c}} H(X_i) = H(X_{\mathcal{Z}^\text{c}} | X_\mathcal{Z}) \label{eq:independent} \\
&= I(X_{\mathcal{Z}^\text{c}}; E(\mathcal{M}) | X_\mathcal{Z}) + H(X_{\mathcal{Z}^\text{c}}|E(\mathcal{M}),X_\mathcal{Z})\\
&= H(E(\mathcal{M}) | X_\mathcal{Z}) - H(E(\mathcal{M}) | X_{\mathcal{V}}) \nonumber\\
&\quad+ \sum_{i =1}^{n-|\mathcal{Z}|} H(X_{z_{|\mathcal{Z}|+i}} | E(\mathcal{M}),X_{z_1}, X_{z_2}, \dotsc, X_{z_{|\mathcal{Z}|+i-1}}) \nonumber\\
&\leq H(E(\mathcal{M})) + \sum_{i =1}^{A(\mathcal{G})} H(X_{z_{|\mathcal{Z}|+i}} | E(\mathcal{M}),X_{z_{r(|\mathcal{Z}|+i)}} ) \label{eq:conditioning}\\
&= H(E(\mathcal{M})) \label{eq:zero-error} \\
&\leq \ell, \label{eq:entropy}
\end{align}
\end{subequations}
where \eqref{eq:independent} follows from the independence of the message bits, \eqref{eq:conditioning} is derived because conditioning cannot increase entropy and by node indexing we have $i > r(i)$ for all $z_i \in \mathcal{Z}^\text{c}$, \eqref{eq:zero-error} follows from the requirement that knowing the sender's messages $E(\mathcal{M})$ and its own message $X_{z_{r(|\mathcal{Z}|+i)}}$, receiver $z_{r(|\mathcal{Z}|+i)}$ must be able to obtain its requested message $X_{z_{|\mathcal{Z}|+i}}$, and \eqref{eq:entropy} is derived because $E(\mathcal{M}) \in \{0,1\}^{\ell}$. Since \eqref{eq:entropy} is true for all $\ell$, we have that $ \min \ell = \ell^*(\mathcal{G}) \geq A(\mathcal{G})$.

Now, we show an index code with $\ell = A(\mathcal{G})$ exists, i.e., $\ell^*(\mathcal{G}) \leq A(\mathcal{G})$. Let $E(\mathcal{M}) = ( x_i: \forall (i,j) \in \mathcal{A} ) \in \{0,1\}^{\ell}$. Since all arcs have distinct tails, we have $\ell = A(\mathcal{G})$. Note that the messages requested by all receivers are captured by the corresponding arcs in $\mathcal{G}$. It follows that having $E(\mathcal{M})$, each receiver is able to obtain its requested message(s). Combining this with the lower bound, we have Theorem~\ref{theorem:1}.
\hfill $\blacksquare$


\end{document}